\newcommand{\ZZ}{\mathbb{Z}}			
\newcommand{\NN}{\mathbb{N}}			
\newcommand{\isdef}{\triangleq}			
\DeclarePairedDelimiter\abs{\lvert}{\rvert}		
\DeclarePairedDelimiter\norm{\lVert}{\rVert}	
\DeclarePairedDelimiter\intinterval{\llbracket}{\rrbracket}		
\newcommand{\ee}{\mathrm{e}}			
\newcommand{\bigo}{O}					
\newcommand{\xPr}{\operatorname{\mathbb{P}}}		
\newcommand\given[1][]{\:#1\vert\:}					
\newcommand{\pspace}[1]{\mathcal{#1}}	
\newcommand{\family}[1]{\mathscr{#1}}	
\newcommand{\TV}{\mathrm{TV}}			
\newcommand{\mix}{\mathsf{mix}}			
\newcommand{\hmax}{
	\overline{h}%
}
\newcommand{\distUnif}{\mathsf{Uniform}}	
\newcommand{\distBern}{\mathsf{Bernoulli}}	
\newcommand{\Neighb}{\mathcal{N}}			
\newcommand{\Moore}{\mathcal{M}}			
\newcommand{\xclass}[1]{\mathbf{#1}}	
\title{
	Reversible cellular automata in presence of noise rapidly forget everything\footnote{%
		To appear in the Proceedings of \href{https://automata2021.lis-lab.fr/}{AUTOMATA 2021}.
		Last update: April 23, 2021.
	}
} 
\titlerunning{Reversible CA with noise} 
\author{Siamak Taati}{Department of Mathematics, American University of Beirut, Lebanon}{siamak.taati@gmail.com}{https://orcid.org/0000-0002-6503-2754}{}
\authorrunning{S. Taati} 
\keywords{Reversible cellular automata, surjective cellular automata, noise, probabilistic cellular automata, ergodicity, entropy, reversible computing, reliable computing, fault tolerance} 
\begin{document}

\maketitle

\begin{abstract}
	We consider reversible and surjective cellular automata perturbed with noise.
	We show that, in the presence of positive additive noise,
	the cellular automaton forgets all the information regarding its initial configuration
	exponentially fast.
	In particular, the state of a finite collection of cells with diameter~$n$ becomes
	indistinguishable from pure noise after $\bigo(\log n)$ time steps.
	This highlights the seemingly unavoidable need for irreversibility
	in order to perform scalable reliable computation in the presence of noise.
\end{abstract}

\section{Introduction}
\subsection{Background}

A major challenge regarding the physical implementation of computation
is the inevitability of transient errors due to noise.
The difficulty is that, even if each component of the system is built to be highly accurate
and has a very small probability of error, in a lengthy computation, occasional errors
are bound to occur.  Such errors may then propagate and entirely corrupt the computation.
The problem of how to perform computation reliably in the presence of noise
(via suitable error-correcting mechanisms)
goes back to von~Neumann and has been studied since~\cite{Neu56,Gac05}.

At the nanoscopic scale, the issue of thermal noise becomes even more pressing.
Not only are the nano-scale components more sensitive to any sort of fluctuations,
but also the heat generated by the computational process has little time and space
to escape the system, and thus leads to an increase in thermal noise.
Landauer argued that the heat generated by a computational process
is associated to its logical irreversibility, and identified a theoretical
lower bound for the amount of heat dissipated in the process of erasing
a bit of information~\cite{Lan61}.
Bennett, Fredkin and Toffoli showed that, at least in theory, any computation can
be efficiently simulated by a logically reversible one, hence requiring
virtually no dissipation~\cite{Ben73, Ben82, FreTof82, Ben89}.
The output of the reversible computation will consist of the intended output
as well as some extra information that allows one to trace the computation
backwards.
If need be, this extra information can be erased away from the computer core,
hence avoiding the accumulation of heat.

By the virtue of their ``physics-like'' features,
cellular automata (CA) have been a popular mathematical
model for studying the physical aspects of computation.
The notion of reversibility has a natural formulation in the setting of CA,
and reversible CA have been widely studied,
not only as models of reversible computers, but also as models of
physical processes and from other points of
view~\cite{Tof77,Vic84,Mar84,TofMar87,MorHar89,TofMar90,Kar94,ChoDro98,Mor08,Sch15,SalTor17,Kar18}.
The reliability of computation in the presence of noise is also
studied in the setting of
cellular automata~\cite{Too80, Gac86, GacRei88, BerSim88, BraNeu94, Gac01, McCPip08, GacTor18, MarSabTaa19}.

Although logical reversibility solves the issue of heat generation
in a computational process, it leads to another difficulty.
Even if it does not dissipate heat itself,
the process is still exposed to external noise.
This external noise can potentially be reduced with proper insulation
but can never be eliminated altogether.
The logical reversibility of the process entails that the noise entering the system
is not dissipated and hence accumulates inside the system.  This means that,
unless one finds a clever workaround,
the state of the system will eventually be overcome by noise~\cite{Ben82}.

For a model of noisy reversible circuits,
Aharonov, Ben-Or, Impagliazzo and Nisan~\cite{AhaBenImpNis96} identified
the limitation imposed by the accumulation of noise.
They considered reversible circuits in which errors occur on wires
at each ``time unit'',
and proved that the output of such a circuit is indistinguishable from pure noise
unless the size of the circuit is exponential in its depth.
Conversely, they showed that every classical Boolean circuit with size~$s$ and depth~$d$
can be simulated by a noisy reversible circuit with size $\bigo\big(s\times 2^{\bigo(d)}\big)$
and depth~$\bigo(d)$.
In particular, polynomial-size noisy reversible circuits
have the power of the complexity class~$\xclass{NC}^1$.
They also proved a similar (though not as sharp) result concerning
noisy quantum circuits.

The result we present here formulates a similar limitation
imposed by the accumulation of noise in the setting of cellular automata.
We show that a reversible CA subject to positive additive noise
forgets its initial configuration exponentially fast, in the sense that
the state of any finite collection of its cells with diameter~$n$ becomes
indistinguishable from pure noise after $\bigo(\log n)$ number of time steps.
It remains open whether any meaningful computation can be done
reliably with such a limitation.

Mathematically, the forgetfulness of a reversible CA subject to noise
corresponds to the exponential ergodicity of the resulting probabilistic CA.
This means that the distribution of the process converges exponentially fast
to a unique invariant measure, which in this case is the uniform Bernoulli measure
(i.e., the distribution of a configuration chosen by independent coin flips).
This result improves upon an earlier partial ergodicity result~\cite{MarSabTaa19},
which was limited
to shift-invariant initial measures, and in which the rate of convergence
was not identified.  The exponential ergodicity of noisy reversible CA
is a special case of a more general ergodicity result
concerning positive-rate probabilistic CA with Bernoulli invariant measures
and their asynchronous counterparts~\cite{MarTaa21}.

The structure of the paper is as follows.
In \cref{sec:setting}, we introduce the setting, and in \cref{sec:theorem:main}
we state the main result.
The proof of the main result, which is based on entropy,
appears in \cref{sec:entropy}.
\Cref{sec:information-loss} is dedicated to the interpretation
of the result regarding the rapid information loss of reversible computers
in the presence of noise.
The paper is concluded with some discussions in \cref{sec:discussion}.

\subsection{Setting}
\label{sec:setting}

\paragraph*{General notation}
We will use the notation $\NN\isdef\{0,1,2,\ldots\}$ and $\ZZ^+\isdef\{1,2,3,\ldots\}$.
We will write $\intinterval{n,m}$ to denote the integer interval~$\{n,n+1,\ldots,m\}$.
We use the notation $x_A$ for the restriction of a function $x$ to a subset $A$
of its domain.
We write $Z\sim q$ to indicate that $Z$ is a random variable with distribution~$q$.
The total variation distance between two probability distributions $p$ and $q$ on a finite set~$A$
is
\begin{align}
	\norm{q-p}_\TV &\isdef
		\sup_{E\subseteq A}\abs{q(E)-p(E)} = \frac{1}{2}\sum_{a\in A}\abs{q(a)-p(a)} \;.
\end{align}
Throughout this article, $\log(\cdot)$ stands for the natural logarithm.

\paragraph*{Cellular automata}
Cellular automata (CA) are abstract models of massively parallel computation.
A \emph{configuration} of the model is an assignment of symbols from a finite alphabet~$\Sigma$
to every site of the lattice $\ZZ^d$ (for $d=1,2,\ldots$).
The sites of the lattice are called \emph{cells}
and the symbol on each cell is referred to as its \emph{state}.
At each step of the computation,
the states of all cells are simultaneously updated according to a \emph{local rule}.
The local rule takes into account the current state of the cell to be updated
as well as its neighbours.
More specifically,
the local rule is a function $f:\Sigma^\Neighb\to\Sigma$, where $\Neighb\subseteq\ZZ^d$ is
a finite set indicating the relative positions of the \emph{neighbours} of each cell,
possibly including the cell itself.
A configuration $x\in\Sigma^{\ZZ^d}$ is updated to a configuration $Fx\in\Sigma^{\ZZ^d}$,
where $(Fx)_i \isdef f\big((x_{i+a})_{a\in\Neighb}\big)$ for each cell $i\in\ZZ^d$.
We refer to~$F$ as the \emph{global map} of the CA.
The computation thus consists in iterating the global map $F$ on an initial configuration.
We identify a CA with its global map~$F$ and speak of the CA $F$.

\paragraph*{Surjectivity, injectivity, and reversibility}
A CA is said to be \emph{surjective} (resp., \emph{injective}, \emph{bijective})
if its global map is surjective (resp., injective, bijective).
It is well-known that every injective CA is also surjective,
and hence bijective.
In fact, the Garden-of-Eden theorem states that surjectivity
is equivalent to pre-injectivity~\cite{Moo62,Myh63}.
A CA $F$ is said to be \emph{pre-injective} if whenever two distinct configurations $x$ and $y$
agree on all but finitely many cells, their images $Fx$ and $Fy$ are distinct.

A CA $F$ is said to be \emph{reversible}
if $F$ is an invertible map and $F^{-1}$ is itself a CA.
It follows from a topological argument that every bijective CA
is automatically reversible~\cite{Hed69}.  Thus, injectivity, bijectivity and reversibility
are equivalent conditions.

\paragraph*{Noise}
In this paper, we are concerned with the effect of transient noise
on the computation carried out by a CA.
In the presence of noise, random errors might occur during the updates
of the cells.

We restrict ourselves to a specific model of noise,
namely \emph{additive noise}, although the results
of the current paper remain true with the somewhat more general
model of \emph{permutation noise}~\cite{MarSabTaa19}.
We (arbitrarily) identify~$\Sigma$ with an Abelian group $(\Sigma,+)$.
Subject to an additive noise with \emph{noise distribution}~$q$,
a symbol $a$ is replaced with $a+Z$, where $Z$ is a random variable
with distribution~$q$.  We will assume that the noise distribution~$q$
is strictly positive.  In particular,
$\xPr(a+Z=b)=q(b-a)>0$ for every $a,b\in\Sigma$.

At each time step of the computation, the state of each cell is
first updated according to the local rule of the CA and is
then perturbed with positive additive noise. 
The noise variables at different cells and different time steps
are all assumed to be independent.

More specifically, the noise is described by a family $(Z^t_i)_{i\in\ZZ^d,t\in\ZZ^+}$
of independent random variables with distribution~$q$.
The trajectory of the noisy computation starting from a configuration~$x$
is given by a sequence of random configurations $(X^t)_{t\in\NN}$,
where $X^0\isdef x$, and 
\begin{align}
	X^t_i &\isdef f\big((X^{t-1}_{i+a})_{a\in\Neighb}\big) + Z^t_i
\end{align}
for every time step $t>0$ and every cell $i\in\ZZ^d$.

\paragraph*{Probabilistic CA}
The noisy computation can be described by a probabilistic CA.
In a \emph{probabilistic} CA (PCA), the local rule is probabilistic and
the updates at different cells and different time steps are performed independently.
More specifically, the local rule is given by a stochastic matrix
$\varphi:\Sigma^\Neighb\times\Sigma\to[0,1]$
(hence, $\sum_{b\in\Sigma}\varphi(u,b)=1$ for each $u\in\Sigma^\Neighb$).
A configuration $x$ is updated to a random configuration $Y$
with distribution
\begin{align}
	\xPr(Y_A=y_A) = \prod_{i\in A}\varphi\big((x_{i+a})_{a\in\Neighb},y_i\big)
\end{align}
for every finite set $A\subseteq\ZZ^d$.
The role of the global map in the deterministic case is played by a (global) transition kernel $\Phi$
where $\Phi(x,\cdot)$ indicates the distribution of $Y$.
(See~\cite{TooVasStaMitKurPir90} or~\cite{MarSabTaa19} for more details.)
The trajectory of a PCA $\Phi$ is a Markov process with transition kernel~$\Phi$,
that is, a sequence $(X^t)_{t\in\NN}$ of random configurations such that
\begin{romanenumerate}
	\item Given $X^t$, the configuration $X^{t+1}$ is independent of
		the configurations $X^0,X^1,\ldots,X^{t-1}$.
	\item Given $X^t$, the distribution of $X^{t+1}$ is given by $\Phi(X^t,\cdot)$.
\end{romanenumerate}

In the case of a CA $F$ with additive noise, the local rule of the resulting PCA
is given by
\begin{align}
	\varphi(u,b) &\isdef q\big(b-f(u)\big) \;,
\end{align}
where $f$ is the local rule of $F$ and $q$ is the noise distribution. 

\paragraph*{Ergodicity}

We say that a probability measure $\lambda$ on $\Sigma^{\ZZ^d}$ is \emph{invariant} under
a PCA~$\Phi$ if $X^{t+1}\sim\lambda$ whenever $X^t\sim\lambda$.
We say that $\Phi$ is \emph{ergodic} if it has a unique invariant measure $\lambda$
and furthermore, for any (possibly random) starting configuration~$X^0$, the distribution of $X^t$
converges weakly to $\lambda$.  This means that for every
finite set $A\subseteq\ZZ^d$ and every $u\in\Sigma^A$,
\begin{align}
	\xPr(X^t_A=u) &\to \lambda\big(\{\hat{x}:\hat{x}_A=u\}\big)
		\qquad\text{as $t\to\infty$.}
\end{align}
We can interpret the ergodicity of a PCA $\Phi$
as $\Phi$ ``forgetting'' its initial configuration.
However, the convergence (and hence the process of forgetting the initial configuration)
can potentially be slow.

Among the PCA that are ergodic, it is quite common that the convergence
towards the unique invariant measure is exponentially fast, in the sense that,
for every finite set $A\subseteq\ZZ^d$,
\begin{align}
	\norm[\big]{\xPr(X^t_A\in\cdot\,)-\lambda\big(\{\hat{x}:\hat{x}_A\in\cdot\,\}\big)}_\TV
		&\leq \alpha_A\ee^{-\beta t} \;,
\end{align}
where $\beta>0$ is a constant (independent of~$A$) and $\alpha_A$ depends on the set~$A$
but not on~$t$ or the initial configuration~$X^0=x$.
(Here, $\xPr(X^t_A\in\cdot\,)$ stands for the distribution of $X^t_A$
and $\lambda\big(\{\hat{x}:\hat{x}_A\in\cdot\,\}\big)$ for the marginal of~$\lambda$
on~$A$.)
It is the dependence of $\alpha_A$ on the set~$A$ that has more relevant information
on the speed of convergence.
We will discuss this further in \cref{sec:information-loss}.

In the current paper,
the unique invariant measure of the PCA we study will be the \emph{uniform Bernoulli measure},
that is, the distribution of a random configuration
in which the states of different cells are chosen uniformly at random from~$\Sigma$
and independently from one another.

\subsection{Statement of the theorem}
\label{sec:theorem:main}

While the primary interest here is in reversible CA in the presence of noise,
our main result remains true for surjective CA.
By the \emph{diameter} of a finite set $A\subseteq\ZZ^d$
we mean the smallest $n\in\NN$ such that $A\subseteq u+\intinterval{0,n-1}^d$
for some $u\in\ZZ^d$.

\begin{theorem}[Surjective CA with additive noise]
\label{thm:main-result}
	Let $F:\Sigma^{\ZZ^d}\to\Sigma^{\ZZ^d}$ be a surjective CA.
	Let $\Phi$ be a PCA describing the perturbation of $F$ with positive additive noise.
	Then, $\Phi$ is exponentially ergodic with the uniform Bernoulli measure~$\lambda$
	as the unique invariant measure.
	
	More specifically, there exist constants $\alpha,\beta,a,b>0$
	such that if $(X^t)_{t\in\NN}$ is a trajectory of $\Phi$ with arbitrary initial configuration,
	then
	\begin{align}
		\norm[\big]{\xPr(X^t_A\in\cdot\,)-\lambda\big(\{\hat{x}:\hat{x}_A\in\cdot\,\}\big)}_\TV
			&\leq \alpha\ee^{-\beta t}n^{(d-1)/2} \;,
	\end{align}
	for every finite set $A\subseteq\ZZ^d$ with diameter~$n$
	and every $t\geq a\log n + b$.
\end{theorem}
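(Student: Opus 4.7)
My strategy will be to measure the discrepancy between the time-$t$ marginal $\mu^t_A$ and the uniform marginal $\lambda_A$ via their Kullback--Leibler divergence $D_t(A) := D(\mu^t_A \| \lambda_A) = |A|\log|\Sigma| - H(X^t_A)$, the \emph{entropy deficit} of $A$ at time $t$.  I aim to show that $D_t(A)$ decays exponentially in $t$, and then invoke Pinsker's inequality $\|\mu^t_A - \lambda_A\|_\TV \leq \sqrt{D_t(A)/2}$ to deduce the claimed total-variation estimate.  Entropy is convenient here because the two ingredients of the dynamics behave well with respect to it: the deterministic CA step preserves entropy to leading order (as a consequence of surjectivity), and the noise step strictly reduces the entropy deficit (via a standard convolution contraction).

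\textbf{Invariance of $\lambda$.}  As a sanity check, if $X^t \sim \lambda$, then the balance theorem for surjective CA (Hedlund) gives $FX^t \sim \lambda$, and the convolution of the uniform measure with any distribution on the abelian group $\Sigma$ remains uniform, so $X^{t+1} \sim \lambda$.  Hence $\lambda$ is the natural fixed point of $\Phi$, and $D_t(A)$ does measure the distance to equilibrium.

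\textbf{One-step contraction.}  Fix a finite $A$ and let $B := A + \Neighb$, so that $X^{t+1}_A = g(X^t_B) + Z^{t+1}_A$ for $g$ the block map of $F$ from $B$ to $A$, with the noise on $A$ independent of $X^t_B$.  I would like a recursion of the form
\[
	D_{t+1}(A) \;\leq\; (1-c)\, D_t(B) + \text{(lower-order boundary corrections)},
\]
for some $c = c(q) > 0$.  Two ingredients produce it.  Surjectivity makes $g : \Sigma^B \to \Sigma^A$ close to balanced, so by the data-processing inequality $D(\text{law}(g(X^t_B)) \| \lambda_A) \leq D_t(B)$ up to a boundary correction.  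Separately, for any distribution $p$ on $\Sigma$, convolution with $q$ strictly contracts KL to the uniform: there is $c = c(q) > 0$ with $D(p * q \| \lambda_\Sigma) \leq (1-c) D(p \| \lambda_\Sigma)$; because the noise on $A$ is a product channel acting cell-by-cell, this single-cell contraction lifts to a contraction on $A$ as a whole.  Combining the two yields the recursion above.

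\textbf{Iteration and main obstacle.}  Iterating the recursion gives $D_t(A) \leq (1-c)^t D_0(A^{(t)})$ plus accumulated boundary corrections, where $A^{(t)} := A + t\cdot\Neighb$ has diameter at most $n + O(t)$.  Since the initial configuration is deterministic, $D_0(A^{(t)}) = |A^{(t)}|\log|\Sigma| = O((n+t)^d)$, and together with $(1-c)^t = \ee^{-\beta' t}$ this already establishes ergodicity.  The delicate part, which I expect to be the main obstacle, is sharpening the iteration so that the pre-factor in the entropy bound scales as the \emph{surface area} $n^{d-1}$ rather than the full volume $n^d$: intuitively, once the interior of $A$ has equilibrated, new information can only enter through $\partial A$, so contributions from deep inside $A^{(t)}$ should be summed with geometric decay rather than added at face value.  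A careful accounting of the boundary corrections along the iteration should deliver $D_t(A) \leq \alpha^2 \ee^{-2\beta t} n^{d-1}/2$ for $t \geq a\log n + b$; applying Pinsker then produces the desired bound $\|\mu^t_A - \lambda_A\|_\TV \leq \alpha \ee^{-\beta t} n^{(d-1)/2}$.
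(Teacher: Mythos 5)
Your proposal is correct in substance but follows a genuinely different route from the paper. The paper iterates entropy on a \emph{fixed} window $J$: pre-injectivity (Garden of Eden) gives $H\big((FX)_J\big)\geq H(X_J)-c(J)$ with a boundary leakage term $c(J)=\Theta(n^{d-1})$, and combining this with the noise gain $\kappa\abs{J}\hmax$ per step yields a recursion whose entropy deficit \emph{saturates} at $\Theta(n^{d-1})$ rather than decaying in $t$; to push past that saturation the paper needs a separate ``bootstrap'' packing argument ($k^d$ independent copies of the process inside a much larger box $S_m$ with $m=k(n+2rt)$ and $k=k_t$ growing exponentially in $t$). Your recursion instead tracks the Kullback--Leibler deficit on a \emph{growing} window: by the balance theorem for surjective CA the block map $g:\Sigma^{A+\Neighb}\to\Sigma^A$ pushes the uniform measure exactly onto the uniform measure, so data processing gives $D(\mu^{t}_{A}\,\|\,\lambda_A)\leq(1-\kappa)\,D(\mu^{t-1}_{A+\Neighb}\,\|\,\lambda_{A+\Neighb})$ with \emph{no} additive error, and iterating yields $D(\mu^t_A\,\|\,\lambda_A)\leq(1-\kappa)^t\,\hmax\,(n+2rt)^d$ outright, for all $t$. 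This buys you a clean exponential decay without any bootstrap, at the price of a volume rather than surface prefactor.

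Three points need attention before this is a complete proof. First, your hedge ``up to a boundary correction'' in the CA step should be deleted, and the distinction matters: the balance property makes that step exact, whereas if there really were an additive $\Theta(n^{d-1})$ error per iteration the accumulated corrections would be dominated by their last term and the deficit would stall at $\Theta\big((n+2rt)^{d-1}\big)$ --- precisely the saturation that forces the paper into its bootstrap lemma. Second, the claim that the single-cell contraction ``lifts'' to the product channel is true here but not for the reason you give: strong data-processing constants do not tensorize in general. You need the specific decomposition $q=\kappa u+(1-\kappa)\tilde q$ with $\kappa=\abs{\Sigma}\min_{a}q(a)$, together with a chain-rule argument over the cells of $A$; this is exactly the paper's lemma on the effect of noise on joint entropy. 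Third, the ``main obstacle'' you flag at the end is illusory: you do not need to sharpen the volume prefactor to a surface one, because the bound is only asserted for $t\geq a\log n+b$ and the constants are yours to choose. Writing $\gamma\isdef-\tfrac12\log(1-\kappa)$, Pinsker gives $\norm{\mu^t_A-\lambda_A}_\TV\leq\sqrt{\hmax/2}\,\ee^{-\gamma t}(n+2rt)^{d/2}$, and choosing $\beta<\gamma$ and $a$ with $(\gamma-\beta)a\geq\tfrac12$ converts this into $\alpha\ee^{-\beta t}n^{(d-1)/2}$ for all $t\geq a\log n+b$ (indeed into any polynomially smaller prefactor). With these three repairs your argument is a complete, and arguably shorter, proof of the theorem.
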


The above theorem completes an earlier partial result in which the uniqueness and convergence
(without rate of convergence) was established only among shift-invariant measures~\cite{MarSabTaa19}.
Theorem~\ref{thm:main-result} is a special case (relevant to reversible computing)
of a more general result: every probabilistic CA with strictly positive transition probabilities
which has a Bernoulli invariant measure is exponentially ergodic~\cite{MarTaa21}.

\section{Accumulation of entropy}
\label{sec:entropy}

Like the result of Aharonov et al.~\cite{AhaBenImpNis96},
the proof of \cref{thm:main-result} is based on entropy.
Recall that the (Shannon) \emph{entropy} of a discrete random variable $X$ (measured in \emph{nats})
is defined as
\begin{align}
	H(X) &\isdef -\sum_x\xPr(X=x)\log\xPr(X=x) \;.
\end{align}
The entropy of~$X$ measures the average information content of~$X$.
If $X$ takes its values within a finite set $\Gamma$,
then $H(X)\leq\log\abs{\Gamma}$, with equality if and only if~$X$ is uniformly
distributed over~$\Gamma$.
We refer to~\cite{CovTho06} for information on entropy and its properties.

We follow the approach of the earlier proof of ergodicity modulo shift~\cite{MarSabTaa19}.
Namely, we use the fact that positive additive noise increases the entropy
of every finite collection of cells, while a surjective CA does not erase
the entropy and only ``diffuses'' it.
In order to achieve complete ergodicity with sharp rate of convergence,
we use two new ingredients:
\begin{alphaenumerate}
	\item An explicit bound on the amount of entropy increase due to noise,
	\item A ``bootstrap argument'' showing that if the rate of entropy increase
		is high compared to the rate of entropy leakage, then the entropy of each
		finite set will inevitably accumulate and rise up to
		its maximum capacity.
\end{alphaenumerate}

That the proof is based on entropy is natural.
A reversible CA in the presence of noise can be thought of as
a (microscopically reversible) physical system in contact
with a heat bath.  The entropy increase is therefore a manifestation
of the second law of thermodynamics for such systems.

\subsection{Effect of additive noise on entropy}
\label{sec:entropy:noise}

We set $\hmax\isdef\log\abs{\Sigma}$, so that $\hmax$ is the highest possible
entropy of a $\Sigma$-valued random variable.
Throughout this section, we also let $q:\Sigma\to(0,1)$ be a strictly positive probability distribution,
and set $\kappa\isdef\abs{\Sigma}\min_{a\in\Sigma}q(a)$.
Note that $0<\kappa\leq 1$.
To avoid trivial situations, we assume that $\Sigma$ has at least two elements
and that $q$ is not uniform.  Hence, $\hmax>0$ and $\kappa<1$.

\begin{lemma}[Effect of additive noise on entropy]
\label{lem:entropy:noise}
	If $A$ and $N$ are independent $\Sigma$-valued random variables with $N\sim q$,
	then
	\begin{align}
		H(A+N) &\geq \kappa\hmax + (1-\kappa) H(A) \;.
	\end{align}
\end{lemma}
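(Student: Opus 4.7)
The plan is to exploit a mixture decomposition of the noise distribution $q$ that isolates a uniform component of mass $\kappa$. Concretely, since $q(a) \geq \min_{b} q(b) = \kappa/\abs{\Sigma}$ for every $a \in \Sigma$, setting $u$ to be the uniform distribution on $\Sigma$ and
\[
	q'(a) \isdef \frac{q(a) - \kappa/\abs{\Sigma}}{1-\kappa}
\]
gives a bona fide probability distribution $q'$ on $\Sigma$ with $q = \kappa\, u + (1-\kappa)\, q'$. The point of this decomposition is that the uniform piece will saturate the entropy, while the residual piece will at worst preserve it.

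Next, I would realize $N$ on an enlarged probability space as a mixture. Introduce an auxiliary Bernoulli random variable $B$ with $\xPr(B=1) = \kappa$, and two further random variables $U \sim u$ and $N' \sim q'$, all mutually independent of each other and of $A$. Define $N$ to coincide with $U$ on $\{B=1\}$ and with $N'$ on $\{B=0\}$; then $N$ has distribution $q$ and is independent of $A$, as required. Applying the fact that conditioning reduces entropy,
\[
	H(A+N) \;\geq\; H(A+N \given B) \;=\; \kappa\, H(A+U) + (1-\kappa)\, H(A+N') \;.
\]

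It then suffices to bound the two conditional terms. For the first, the convolution of any distribution on $\Sigma$ with the uniform distribution on the Abelian group $\Sigma$ is uniform, so $A+U$ is uniform on $\Sigma$ and therefore $H(A+U) = \hmax$. For the second, adding an independent variable cannot decrease entropy: since given $N'$ the map $A \mapsto A+N'$ is a bijection and $A$ is independent of $N'$,
\[
	H(A+N') \;\geq\; H(A+N' \given N') \;=\; H(A \given N') \;=\; H(A) \;.
\]
Substituting these two bounds back into the mixture inequality yields $H(A+N) \geq \kappa \hmax + (1-\kappa) H(A)$, as desired. There is no real obstacle here beyond checking the mixture decomposition; the argument is essentially an entropic manifestation of the fact that a positive fraction of the noise is ``maximally randomizing''.
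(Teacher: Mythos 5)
Your proof is correct and follows essentially the same route as the paper: the same decomposition $q=\kappa u+(1-\kappa)q'$, the same Bernoulli mixture representation, and conditioning on $B$ to split off the uniform part. The only cosmetic difference is in verifying $H(A+N')\geq H(A)$, where you condition on $N'$ directly while the paper phrases it via the chain rule and nonnegativity of mutual information; these are equivalent.
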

\begin{proof}
	The distribution $q$ can be decomposed as
	\begin{align}
		q &= \kappa u + (1-\kappa)\tilde{q}
	\end{align}
	where $u$ is the uniform distribution on $\Sigma$ and $\tilde{q}$
	is another distribution on $\Sigma$ given by
	$\tilde{q}(a)\isdef\frac{q(a)-\nicefrac{\kappa}{\abs{\Sigma}}}{1-\kappa}$ for $a\in\Sigma$.
	Thus, without loss of generality (by defining a new probability space if necessary),
	we can assume that $N=BU+(1-B)\tilde{N}$ where $B\sim\distBern(\kappa)$, $U\sim\distUnif(\Sigma)$
	and $\tilde{N}\sim\tilde{q}$, and the variables $A$, $B$, $U$ and $\tilde{N}$ are independent.
	
	Using this representation, we have
	\begin{align}
		\MoveEqLeft H(A+N) \\
		&= H(A+BU+(1-B)\tilde{N}) \\
		&\geq H(A+BU+(1-B)\tilde{N}\given B) \\
		&= \kappa H(A+\underbrace{BU + (1-B)\tilde{N}}_{\text{$=U$ when $B=1$}}\given B=1)
			+ (1-\kappa)H(A+\underbrace{BU+(1-B)\tilde{N}}_{\text{$=\tilde{N}$ when $B=0$}}\given B=0) \\
		&=
			\kappa\hmax + (1-\kappa)H(A+\tilde{N}) \;.
	\end{align}
	The claim follows once we recall that $H(A+\tilde{N})\geq H(A)$
	whenever $A$ and $\tilde{N}$ are independent $\Sigma$-valued random variables.
	Namely, we have
	\begin{align}
		H(A+\tilde{N},\tilde{N}) &= H(\tilde{N}) + \overbrace{H(A+\tilde{N}\given \tilde{N})}^{=H(A)} \\
		H(A+\tilde{N},\tilde{N}) &= H(A+\tilde{N}) + H(\tilde{N}\given A+\tilde{N})
	\end{align}
	from which we get $H(A+\tilde{N}) = H(A) + I(A+\tilde{N};\tilde{N})$
	where $I(A+\tilde{N};\tilde{N})\geq 0$ is the mutual information between $A+\tilde{N}$ and $\tilde{N}$.
\end{proof}

The conditional version of the above lemma can be proven similarly,
or by reducing it to the unconditional version.
\begin{lemma}[Effect of additive noise on conditional entropy]
\label{lem:entropy:noise:conditional}
	If $A$ and $N$ are $\Sigma$-valued random variables and $C$
	is another random variable conditioned on which $A$ and $N$ are independent
	with $N\sim q$, then
	\begin{align}
		H(A+N\given C) &\geq \kappa\hmax + (1-\kappa) H(A\given C) \;.
	\end{align}
\end{lemma}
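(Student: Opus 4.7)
The plan is to reduce the conditional statement to the unconditional Lemma~\ref{lem:entropy:noise} by conditioning pointwise on $C$. The hypothesis states that, given $C$, the variables $A$ and $N$ are independent with $N\sim q$. Hence for (almost) every value $c$ in the range of $C$, the conditional distributions of $A$ and $N$ given $C=c$ satisfy exactly the hypothesis of the unconditional lemma. Applying Lemma~\ref{lem:entropy:noise} fibrewise therefore gives
\begin{align}
	H(A+N\given C=c) &\geq \kappa\hmax + (1-\kappa)H(A\given C=c)
\end{align}
for each such $c$. Taking expectation over $C$ on both sides and using the definition $H(X\given C)=\xExp_C[H(X\given C=c)]$ yields the claim directly.

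Alternatively, one can mimic the proof of Lemma~\ref{lem:entropy:noise} step by step: decompose $q=\kappa u+(1-\kappa)\tilde{q}$, enlarge the probability space with independent variables $B\sim\distBern(\kappa)$, $U\sim\distUnif(\Sigma)$ and $\tilde{N}\sim\tilde{q}$, all taken jointly independent of $(A,C)$, and represent $N=BU+(1-B)\tilde{N}$. Conditioning additionally on $B$ can only decrease entropy, so
\begin{align}
	H(A+N\given C) &\geq H(A+N\given C,B) \\
	&= \kappa\, H(A+U\given C, B=1) + (1-\kappa)\, H(A+\tilde{N}\given C, B=0) \\
	&= \kappa\hmax + (1-\kappa)\, H(A+\tilde{N}\given C) \;,
\end{align}
using that $A+U$ is uniform on $\Sigma$ independently of $(C,B)$ when $U$ is uniform and independent of $(A,C,B)$. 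Finally, one observes that adding an independent (given $C$) noise never decreases conditional entropy: this follows from the identity $H(A+\tilde{N}\given C)=H(A\given C)+I(A+\tilde{N};\tilde{N}\given C)$ together with nonnegativity of conditional mutual information, exactly as in the last step of the unconditional proof.

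The only real subtlety is bookkeeping: one must be sure that $\tilde{N}$ (and $U$, $B$) can be chosen independent not just of $A$ but of the pair $(A,C)$, so that the conditional independence structure used in the telescoping of entropies is preserved. This is a harmless enlargement of the probability space, which is why the fibrewise reduction in the first paragraph is arguably the cleanest route and avoids any such bookkeeping altogether.
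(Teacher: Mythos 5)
Your proposal is correct and matches the paper, which gives no written proof but explicitly indicates exactly your two routes (``can be proven similarly, or by reducing it to the unconditional version''); the fibrewise reduction---applying Lemma~\ref{lem:entropy:noise} to the conditional distributions given $C=c$ and averaging over $c$---is sound, as is the parallel rerun of the decomposition argument with $B,U,\tilde{N}$ independent of the pair $(A,C)$. Nothing further is needed.
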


For a collection of random symbols subjected to independent noise,
we have the following lemma as a corollary.
\begin{lemma}[Effect of additive noise on joint entropy]
\label{lem:entropy:noise:joint}
	If $\underline{A}\isdef(A_1,A_2,\ldots,A_n)$ and $\underline{N}\isdef(N_1,N_2,\ldots,N_n)$
	are two independent collections of $\Sigma$-valued random variables
	and $N_1,N_2,\allowbreak\ldots,\allowbreak N_n$ are i.i.d.\ with distribution $q$, then
	\begin{align}
		H(\underline{A}+\underline{N}) &\geq n\kappa\hmax + (1-\kappa)H(\underline{A}) \;.
	\end{align}
\end{lemma}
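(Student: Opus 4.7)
The plan is to prove the joint lemma by induction on $n$, using the conditional version (\cref{lem:entropy:noise:conditional}) as the single-coordinate upgrade at each step and the data-processing inequality as the glue. The base case $n=1$ is immediate from \cref{lem:entropy:noise}. For the inductive step, write $\underline{A}' \isdef (A_1,\ldots,A_{n-1})$ and $\underline{N}' \isdef (N_1,\ldots,N_{n-1})$, and use the chain rule
\begin{align}
	H(\underline{A}+\underline{N}) = H(\underline{A}'+\underline{N}') + H(A_n+N_n \given \underline{A}'+\underline{N}').
\end{align}
The inductive hypothesis handles the first summand. For the second, I want to apply \cref{lem:entropy:noise:conditional} with $C \isdef \underline{A}'+\underline{N}'$.

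To invoke that lemma I first have to verify its hypothesis: conditioned on $C$, the variables $A_n$ and $N_n$ are independent, and $N_n$ still has distribution~$q$. Both follow from the global independence $\underline{A}\perp\underline{N}$ together with $N_n$ being independent of $(\underline{A},N_1,\ldots,N_{n-1})$; since $C$ is a deterministic function of $(\underline{A}',\underline{N}')$, it is independent of $N_n$, and given $C$ the pair $(A_n,N_n)$ factors. Applying \cref{lem:entropy:noise:conditional} yields
\begin{align}
	H(A_n+N_n \given \underline{A}'+\underline{N}')
		\geq \kappa\hmax + (1-\kappa) H(A_n \given \underline{A}'+\underline{N}').
\end{align}

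The only nontrivial remaining ingredient is to replace the noisy conditioning by the clean one. Here I use that $A_n \leftrightarrow \underline{A}' \leftrightarrow \underline{A}'+\underline{N}'$ forms a Markov chain: given $\underline{A}'$, the noises $\underline{N}'$ are independent of $A_n$, so $\underline{A}'+\underline{N}'$ is a noisy function of $\underline{A}'$ that is conditionally independent of $A_n$. The data-processing inequality then gives $I(A_n;\underline{A}'+\underline{N}')\leq I(A_n;\underline{A}')$, i.e.\ $H(A_n \given \underline{A}'+\underline{N}')\geq H(A_n \given \underline{A}')$. Combining with the inductive hypothesis and the chain rule identity $H(\underline{A})=H(\underline{A}')+H(A_n\given\underline{A}')$, the $(1-\kappa)$ factors assemble into $(1-\kappa)H(\underline{A})$ while the $\kappa\hmax$ contributions accumulate to $n\kappa\hmax$, giving the claimed bound.

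I do not foresee a real obstacle here: the combinatorial structure of the bound is a linear recursion with solution $n\kappa\hmax + (1-\kappa)H(\underline{A})$, and the only thing that must be handled with care is that applying \cref{lem:entropy:noise} coordinatewise naively would produce a $(1-\kappa)^n$ factor rather than $(1-\kappa)$. The mildly delicate point is therefore the Markov-chain/data-processing step above, which ensures that when we peel off coordinates one at a time, the remaining conditional entropy on past coordinates is measured in a way compatible with the chain rule for $H(\underline{A})$, so that the factors $(1-\kappa)$ telescope rather than compound.
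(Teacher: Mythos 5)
Your proof is correct and follows essentially the same route as the paper's: both peel off one coordinate at a time via the chain rule (your induction is just the paper's sum over $k$ in disguise) and reduce each term to the conditional single-symbol lemma, \cref{lem:entropy:noise:conditional}. The only difference is in how the conditioning is transferred from the noisy past $(A_i+N_i)_{i<k}$ to the clean past $(A_i)_{i<k}$: the paper does this \emph{before} invoking the conditional lemma, by conditioning additionally on $(A_i)_{i<k},(N_i)_{i<k}$ (which can only decrease entropy) and then discarding the irrelevant $(N_i)_{i<k}$, whereas you invoke the lemma first with the noisy conditioning and then apply data processing to $A_n$ along the Markov chain $A_n \leftrightarrow \underline{A}' \leftrightarrow \underline{A}'+\underline{N}'$ --- both steps are valid and of comparable weight.
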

\begin{proof}
	Using the chain rule and the fact that $N_k$'s are independent of one another
	and	independent $A_k$'s, we have
	\begin{align}
		H(\underline{A}) &=	\label{eq:entropy:noise:joint:chain:1}
		\sum_{k=1}^n H\big(A_k \given[\big] (A_i)_{i<k}\big)
	\shortintertext{and}
		H(\underline{A}+\underline{N}) &=
			\sum_{k=1}^n H\big(A_k+N_k \given[\big] (A_i+N_i)_{i<k}\big) \\
		&\geq
			\sum_{k=1}^n H\big(A_k+N_k \given[\big] (A_i)_{i<k}, (N_i)_{i<k}\big) \\
		&= \label{eq:entropy:noise:joint:chain:2}
			\sum_{k=1}^n H\big(A_k+N_k \given[\big] (A_i)_{i<k}\big) \;.
	\end{align}
	Applying \cref{lem:entropy:noise:conditional}
	to the corresponding terms in~\eqref{eq:entropy:noise:joint:chain:1}
	and~\eqref{eq:entropy:noise:joint:chain:2} yields the result.
\end{proof}

\subsection{Effect of a surjective CA on entropy}
\label{sec:entropy:surjective-CA}

The effect of surjective CA on entropy was clarified in the earlier proof
of ergodicity modulo shift~\cite{MarSabTaa19}.
For completeness, we recall the proof.
Given a set $J\subseteq\ZZ^d$ and an integer $r\geq 0$,
we denote by $\Moore^r(J)\isdef J+\intinterval{-r,r}^d$ the set of all cells
that are within distance $r$ from~$J$.
We also define $\partial\Moore^r(J)\isdef\Moore^r(J)\setminus J$.
\begin{lemma}[Effect of a surjective CA on entropy]
\label{lem:entropy:surjective-ca}
	Let $F:\Sigma^{\ZZ^d}\to\Sigma^{\ZZ^d}$ be a surjective CA
	with neighbourhood $\Neighb\subseteq\intinterval{-r,r}$.
	Then, for every random configuration $X$ and every finite set $J\subseteq\ZZ^d$
	we have
	\begin{align}
		H\big((FX)_J\big) &\geq H(X_J) - c(J)
	\end{align}
	where $c(J)\isdef\big(\abs[\big]{\partial\Moore^{2r}(J)}+\abs[\big]{\partial\Moore^r(J)}\big)\hmax$.
\end{lemma}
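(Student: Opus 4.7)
The plan is to combine the classical \emph{block-balance} property of surjective cellular automata with two routine entropy manipulations. Write $Y \isdef FX$ and introduce two nested windows around $J$: the inner shell $K \isdef \Moore^r(J)$ and the outer shell $L \isdef \Moore^{2r}(J) = \Moore^r(K)$. Since $\Neighb \subseteq \intinterval{-r,r}^d$, there is a well-defined block map $g : \Sigma^L \to \Sigma^K$ such that $Y_K = g(X_L)$, and block-balance (a standard consequence of the Garden-of-Eden theorem of Moore and Myhill) asserts that $g$ is surjective with every fibre of exact cardinality $\abs{\Sigma}^{\abs{L}-\abs{K}}$.

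Using this balance, the first half of the argument lower-bounds $H(Y_K)$. Since $Y_K$ is a deterministic function of $X_L$ and, conditioned on any value of $Y_K$, the variable $X_L$ lives on a fibre of size at most $\abs{\Sigma}^{\abs{L}-\abs{K}}$, the chain rule gives
\begin{align*}
	H(X_L) = H(Y_K) + H(X_L \mid Y_K) \leq H(Y_K) + (\abs{L}-\abs{K})\hmax,
\end{align*}
and hence $H(Y_K) \geq H(X_L) - (\abs{L}-\abs{K})\hmax$. Monotonicity $H(X_L) \geq H(X_J)$ (since $J \subseteq L$) together with the identity $\abs{L}-\abs{K} = \abs{\partial\Moore^{2r}(J)} - \abs{\partial\Moore^r(J)}$ (both count $L \setminus K$) then yields
\begin{align*}
	H(Y_K) \geq H(X_J) - \bigl(\abs{\partial\Moore^{2r}(J)} - \abs{\partial\Moore^r(J)}\bigr)\hmax.
\end{align*}

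The second half passes from $Y_K$ down to $Y_J$ by the chain rule in the opposite direction: $H(Y_K) = H(Y_J) + H(Y_{K\setminus J}\mid Y_J) \leq H(Y_J) + \abs{\partial\Moore^r(J)}\hmax$. Combining the two inequalities cancels the $\abs{\partial\Moore^r(J)}\hmax$ term and produces $H(Y_J) \geq H(X_J) - \abs{\partial\Moore^{2r}(J)}\hmax$, which is slightly sharper than the stated bound with $c(J) = (\abs{\partial\Moore^{2r}(J)} + \abs{\partial\Moore^r(J)})\hmax$ and certainly implies it.

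The only nontrivial ingredient is the block-balance theorem itself, which I would cite rather than reprove; once it is granted, everything else is routine entropy bookkeeping, and the only real design choice in the argument is recognising that one should apply balance on the pair $(L,K)$ rather than directly on $(K,J)$. The main obstacle is thus conceptual rather than computational: identifying balance as the right manifestation of surjectivity to feed into the entropy calculation.
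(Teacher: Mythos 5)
Your proof is correct, and it takes a genuinely different route from the paper's. The paper extracts from surjectivity the \emph{pre-injectivity} of $F$ (one direction of the Garden-of-Eden theorem) and uses an explicit splicing construction to show that $X_J$ is completely determined by $X_{\partial\Moore^{2r}(J)}$ together with $(FX)_{\Moore^r(J)}$; the entropy bound then follows from $H(X_J)\leq H\big(X_{\partial\Moore^{2r}(J)},(FX)_{\Moore^r(J)}\big)$ plus boundary bookkeeping. You instead invoke the \emph{balance} property (equivalently, preservation of the uniform Bernoulli measure by surjective CA) to bound the conditional entropy $H(X_L\mid Y_K)$ by the logarithm of the uniform fibre size $\abs{\Sigma}^{\abs{L}-\abs{K}}$. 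Both facts are standard consequences of surjectivity, but they are different ones: the paper's argument is self-contained modulo pre-injectivity, whereas yours outsources the combinatorics to the balance theorem and keeps the entropy manipulation minimal. Note that for $H(X_L\mid Y_K)\leq(\abs{L}-\abs{K})\hmax$ you only need an \emph{upper} bound on fibre cardinalities, and such a uniform bound is precisely what fails for non-surjective CA, so balance is doing real work there. Your version even yields the slightly sharper constant $\abs{\partial\Moore^{2r}(J)}\,\hmax$ in place of $\big(\abs{\partial\Moore^{2r}(J)}+\abs{\partial\Moore^r(J)}\big)\hmax$; since both are $\Theta(n^{d-1})$ for hypercubes of side $n$, this makes no difference to the downstream results.
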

\begin{proof}
	By the Garden-of-Eden theorem, $F$ is pre-injective~\cite{Moo62,Myh63}.
	From the pre-injectivity of~$F$ it follows that for every configuration $x$,
	the pattern $x_J$ is uniquely determined from the patterns $x_{\partial\Moore^{2r}(J)}$
	and $(Fx)_{\Moore^r(J)}$.
	Indeed, suppose that $x$ and $x'$ are two configurations such that
	$x'_{\partial\Moore^{2r}(J)}=x_{\partial\Moore^{2r}(J)}$ and
	$(Fx')_{\Moore^r(J)}=(Fx)_{\Moore^r(J)}$.
	Let $x''$ be another configuration that agrees with $x'$ on $\Moore^{2r}(J)$
	and with $x$ outside $J$.
	Then, $x$ and $x''$ agree everywhere except possibly on~$J$.
	On the other hand, $Fx=Fx''$ because
	\begin{itemize}
		\item $Fx''$ and $Fx$ agree outside $\Moore^r(J)$ because
			$x''$ and $x$ agree outside~$J$,
		\item $Fx''$ and $Fx'$ agree on $\Moore^r(J)$ because
			$x''$ and $x'$ agree on $\Moore^{2r}(J)$,
			and $Fx'$ and $Fx$ agree on $\Moore^r(J)$ by assumption.
	\end{itemize}
	The pre-injectivity of $F$ now implies $x''=x$.
	It follows that $x$ and $x'$ agree on~$J$.
	
	Now, consider a random configuration~$X$.
	Since $X_J$ is a function of 
	$X_{\partial\Moore^{2r}(J)}$ and $(FX)_{\Moore^r(J)}$,
	we have
	\begin{align}
		H(X_J) &\leq H\big(X_{\partial\Moore^{2r}(J)},(FX)_{\Moore^r(J)}\big) \\
		&=
			H\big((FX)_J\big) +
			H\big(X_{\partial\Moore^{2r}(J)},(FX)_{\partial\Moore^r(J)}\given[\big](FX)_J\big) \\
		&\leq
			H\big((FX)_J\big) + \big(\abs[\big]{\partial\Moore^{2r}(J)}+\abs[\big]{\partial\Moore^r(J)}\big)\hmax \;,
	\end{align}
	proving the claim.
\end{proof}

\subsection{Evolution of entropy}
\label{sec:entropy:evolution}

Combining the above lemmas we obtain the following proposition.
\begin{proposition}[Evolution of entropy]
\label{prop:entropy:evolution}
	Let $\Phi$ be a PCA describing the perturbation of a surjective CA $F:\Sigma^{\ZZ^d}\to\Sigma^{\ZZ^d}$
	with positive additive noise,
	and let $q:\Sigma\to(0,1)$ denote the noise distribution.
	Let $X^0,X^1,\ldots$ be a trajectory of $\Phi$ starting from an arbitrary
	configuration~$X^0$.
	Then, for every finite $J\subseteq\ZZ^d$ and each $t\geq 0$ we have
	\begin{align}
		H\big(X^t_J\big) &\geq
			[1-(1-\kappa)^t]\abs{J}\hmax - \tilde{c}(J)
	\end{align}
	where $\kappa\isdef\abs{\Sigma}\min_{a\in\Sigma}q(a)$
	and $\tilde{c}(J)\isdef
	\big(\frac{1-\kappa}{\kappa}\big)\big(\abs[\big]{\partial\Moore^{2r}(J)}+\abs[\big]{\partial\Moore^r(J)}\big)\hmax$.
\end{proposition}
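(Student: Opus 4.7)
The plan is to set up a one-step entropy recurrence by chaining the two preceding lemmas, and then iterate. Write one time step of the PCA as $X^t = FX^{t-1} + N^t$, where $N^t \isdef (N^t_i)_{i\in\ZZ^d}$ is an i.i.d.\ noise tuple with marginal~$q$, independent of $X^0,\dots,X^{t-1}$; restricted to $J$ this reads $X^t_J = (FX^{t-1})_J + N^t_J$, with $N^t_J$ an i.i.d.\ noise tuple independent of $(FX^{t-1})_J$.

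First I will apply \cref{lem:entropy:surjective-ca} to $X^{t-1}$ to get
\begin{align}
	H\bigl((FX^{t-1})_J\bigr) &\geq H(X^{t-1}_J) - c(J),
\end{align}
with $c(J) = \bigl(\abs{\partial\Moore^{2r}(J)} + \abs{\partial\Moore^{r}(J)}\bigr)\hmax$. Then I will apply \cref{lem:entropy:noise:joint} to the pair $\bigl((FX^{t-1})_J, N^t_J\bigr)$ to obtain
\begin{align}
	H(X^t_J) &\geq \abs{J}\kappa\hmax + (1-\kappa)\,H\bigl((FX^{t-1})_J\bigr).
\end{align}
Substituting the first bound into the second yields the one-step recurrence
\begin{align}
	H(X^t_J) &\geq (1-\kappa)\,H(X^{t-1}_J) + \kappa\abs{J}\hmax - (1-\kappa)c(J).
\end{align}

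The last step is to iterate. Writing $h_t\isdef H(X^t_J)$, $L\isdef\abs{J}\hmax$ and $C\isdef c(J)$, unrolling $t$ times gives
\begin{align}
	h_t &\geq (1-\kappa)^t h_0 + \bigl(\kappa L - (1-\kappa)C\bigr)\sum_{k=0}^{t-1}(1-\kappa)^k.
\end{align}
Since $h_0\geq 0$, $0<\kappa\leq 1$, and $\sum_{k=0}^{t-1}(1-\kappa)^k = \bigl(1-(1-\kappa)^t\bigr)/\kappa$, dropping the nonnegative $(1-\kappa)^t h_0$ term and simplifying produces
\begin{align}
	h_t &\geq \bigl[1-(1-\kappa)^t\bigr]L - \tfrac{1-\kappa}{\kappa}C,
\end{align}
which is exactly the stated bound with $\tilde{c}(J) = \tfrac{1-\kappa}{\kappa}c(J)$. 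The case $t=0$ is trivial since $\bigl[1-(1-\kappa)^0\bigr]=0$ and entropies are nonnegative.

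I do not anticipate any real obstacle: the proof is a direct composition of the two preceding lemmas followed by solving a scalar linear recurrence. The only subtlety to verify is that the hypotheses of \cref{lem:entropy:noise:joint} hold at each step, namely that $N^t_J$ is independent of $(FX^{t-1})_J$ and that its coordinates are i.i.d.\ with law~$q$; both are immediate from the definition of the additive-noise PCA, so the chaining step is legitimate without further work.
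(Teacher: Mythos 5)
Your proposal is correct and follows essentially the same route as the paper: chain \cref{lem:entropy:surjective-ca} and \cref{lem:entropy:noise:joint} into the one-step recurrence $H(X^t_J)\geq(1-\kappa)H(X^{t-1}_J)+\kappa\abs{J}\hmax-(1-\kappa)c(J)$, then solve it by iteration (the paper telescopes after multiplying by $(1-\kappa)^{t-s}$, which is the same computation as your unrolling) and bound the geometric factor $1-(1-\kappa)^t$ by $1$ to arrive at $\tilde{c}(J)=\frac{1-\kappa}{\kappa}c(J)$. No gaps.
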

\begin{proof}
	According to \cref{lem:entropy:surjective-ca}, for every $s>0$ we have
	\begin{align}
		H\big((FX^{s-1})_J\big) &\geq H(X^{s-1}_J) - c(J) \;.
	\end{align}
	\Cref{lem:entropy:noise:joint} on the other hand gives
	\begin{align}
		H(X^s_J) &\geq \kappa\abs{J}\hmax + (1-\kappa)H\big((FX^{s-1})_J\big) \;.
	\end{align}
	Combining the two, we find that for $s>0$,
	\begin{align}
	\label{eq:entropy:noisy-surjective-ca}
		H(X^s_J) &\geq (1-\kappa)H(X^{s-1}_J) + \kappa\abs{J}\hmax - (1-\kappa)c(J)\;.
	\end{align}
	Multiplying by $(1-\kappa)^{t-s}$, we obtain
	\begin{align}
		(1-\kappa)^{t-s}H(X^s_J)
			&\geq (1-\kappa)^{t-s+1}H(X^{s-1}_J) + \kappa(1-\kappa)^{t-s}\abs{J}\hmax - (1-\kappa)^{t-s+1}c(J) \;.
	\end{align}
	Summing over $s$ from $1$ to $t$, we get
	\begin{align}
		\sum_{s=1}^t (1-\kappa)^{t-s}H(X^s_J)
			&\geq
				\begin{multlined}[t]
					\sum_{s=1}^t(1-\kappa)^{t-s+1}H(X^{s-1}_J) \\
					+ \big[1-(1-\kappa)^t\big]\abs{J}\hmax
					- \big[\smash{\overbrace{1-(1-\kappa)^t}^{<1}}\big]\frac{1-\kappa}{\kappa}c(J)
				\end{multlined}
	\end{align}
	which after cancellation of the common terms gives
	\begin{align}
		H(X^t_J) &\geq
			(1-\kappa)^t H(X^0_J) + \big[1-(1-\kappa)^t\big]\abs{J}\hmax - \tilde{c}(J) \\
		&\geq
			\big[1-(1-\kappa)^t\big]\abs{J}\hmax - \tilde{c}(J) \;. \\
		& \qedhere
	\end{align}
\end{proof}

As a corollary, we get the following proposition.

\begin{proposition}[Evolution of entropy]
\label{prop:entropy:evolution:equilibrium}
	Let $\Phi$ be a PCA describing the perturbation of a surjective CA $F:\Sigma^{\ZZ^d}\to\Sigma^{\ZZ^d}$
	with positive additive noise,
	and let $q:\Sigma\to(0,1)$ denote the noise distribution.
	There are two constants $a_0,b_0>0$ with the following property.
	If $X^0,X^1,\ldots$ is a trajectory of~$\Phi$ starting from an arbitrary configuration $X^0$,
	then for every finite set $J\subseteq\ZZ^d$, we have
	\begin{align}
		H(X^t_J) &\geq \abs{J}\hmax - 2\tilde{c}(J)
		\qquad\text{for all $t\geq a_0\log\frac{\abs{J}}{\tilde{c}(J)} + b_0$}
	\end{align}
	where $\tilde{c}(J)\isdef
	\big(\frac{1-\kappa}{\kappa}\big)\big(\abs[\big]{\partial\Moore^{2r}(J)}+\abs[\big]{\partial\Moore^r(J)}\big)\hmax$.
\end{proposition}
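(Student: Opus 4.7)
The plan is to derive this statement as an immediate corollary of \cref{prop:entropy:evolution} by solving for the time $t$ at which the exponentially decaying error term becomes dominated by $\tilde{c}(J)$. Comparing the two bounds, the inequality
\begin{align}
	[1-(1-\kappa)^t]\abs{J}\hmax - \tilde{c}(J) &\geq \abs{J}\hmax - 2\tilde{c}(J)
\end{align}
reduces, after cancelling $\abs{J}\hmax$ and rearranging, to the purely numerical condition
\begin{align}
	(1-\kappa)^t \abs{J}\hmax &\leq \tilde{c}(J) \;.
\end{align}
So the entire task is to find constants $a_0, b_0 > 0$, depending only on $\kappa$ and $\hmax$ (hence only on the alphabet $\Sigma$ and the noise distribution $q$), such that $t \geq a_0 \log\frac{\abs{J}}{\tilde{c}(J)} + b_0$ implies this inequality.

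Since $0 < \kappa < 1$, we have $\log\frac{1}{1-\kappa} > 0$, and taking logarithms of both sides of the desired inequality transforms it into
\begin{align}
	t \log\tfrac{1}{1-\kappa} &\geq \log\tfrac{\abs{J}}{\tilde{c}(J)} + \log\hmax \;.
\end{align}
Setting $a_0 \isdef 1/\log\frac{1}{1-\kappa}$ and $b_0 \isdef \max\!\big(1,\, a_0\log\hmax\big)$ yields the required implication directly. (The $\max$ with $1$ is cosmetic, just to ensure $b_0 > 0$ as stated.)

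There is essentially no obstacle here: the hard work was already done in \cref{prop:entropy:evolution}, and the corollary merely repackages that bound by choosing $t$ large enough to swallow the transient term $(1-\kappa)^t\abs{J}\hmax$ into a second copy of $\tilde{c}(J)$. The only thing worth flagging is that the required $t$ scales like $\log(\abs{J}/\tilde{c}(J))$ rather than $\log\abs{J}$; this is useful in the application to \cref{thm:main-result} because, for a set $A$ of diameter $n$ in $\ZZ^d$, the ratio $\abs{J}/\tilde{c}(J)$ is at most polynomial in $n$ (roughly $n$, since $\abs{J} = \bigo(n^d)$ and $\tilde{c}(J) = \Omega(n^{d-1})$), so the threshold $a_0\log\frac{\abs{J}}{\tilde{c}(J)} + b_0$ is of the form $a\log n + b$ as needed in the main theorem.
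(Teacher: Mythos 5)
Your proposal is correct and follows essentially the same route as the paper: both reduce the claim to the sufficient condition $(1-\kappa)^t\abs{J}\hmax\leq\tilde{c}(J)$ via \cref{prop:entropy:evolution} and solve for $t$, obtaining the same $a_0=-1/\log(1-\kappa)$. Your $\max(1,\cdot)$ in the definition of $b_0$ is in fact slightly more careful than the paper's $b_0=-\log\hmax/\log(1-\kappa)$, which can be non-positive when $\abs{\Sigma}=2$.
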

\begin{proof}
	From \cref{prop:entropy:evolution},
	it follows that in order to have $H(X^t_J) \geq \abs{J}\hmax - 2\tilde{c}(J)$,
	it is sufficient that $(1-\kappa)^t\abs{J}\hmax\leq\tilde{c}(J)$.
	We have,
	\begin{align}
		(1-\kappa)^t\abs{J}\hmax\leq\tilde{c}(J) &\Longleftrightarrow
			t\log(1-\kappa)\leq \log\tilde{c}(J) - \log\abs{J} - \log\hmax \\
		&\Longleftrightarrow
			t \geq \frac{\log\abs{J} - \log\tilde{c}(J) + \log\hmax}{-\log(1-\kappa)} \\
		&\Longleftrightarrow
			t \geq a_0\log\frac{\abs{J}}{\tilde{c}(J)} + b_0
	\end{align}
	where $a_0\isdef -1/\log(1-\kappa)$ and $b_0\isdef -\log\hmax/\log(1-\kappa)$.
\end{proof}

The latter proposition can be interpreted as follows.
For $n\geq 0$, consider a hypercube $S_n\isdef\intinterval{0,n-1}^d$ of size $n^d$
in the lattice.
Then, $\tilde{c}(S_n)=\Theta(n^{d-1})$ as $n\to\infty$.
Thus, according to \cref{prop:entropy:evolution:equilibrium},
irrespective of the distribution of the initial configuration~$X^0$,
we have $H(X^t_{S_n})\geq n^d\hmax - \Theta(n^{d-1})$
(i.e., $S_n$ lacks no more than $\Theta(n^{d-1})$ nats of entropy at time $t$)
as soon as $t\geq \Theta(\log n)$.

\subsection{A bootstrap lemma}
\label{sec:entropy:bootstrap}

The next step is a ``bootstrap argument''.
The intuitive idea is as follows.  The effect of noise on a hypercube
$S_n\isdef\intinterval{0,n-1}^d$
is to accumulate entropy as long as the entropy of $S_n$ is less than
its maximum capacity $\abs{S_n}\hmax$.  A surjective CA on the other hand
keeps the entropy of $S_n$ almost preserved except for a leakage of
size $\bigo(\abs{\partial S_n})$ per iteration
through the boundary of $S_n$.  The ``equilibrium'' is reached at time $t_n=\bigo(\log n)$,
when the rate of accumulation and the maximum rate of leakage roughly match.
Now consider a much larger hypercube $S_m$ which contains many disjoint 
copies of $S_n$.
For $S_m$, a similar ``equilibrium'' is reached at time $t_m=\bigo(\log m)$.
However, the entropy leaking from the copies of $S_n$
will not have enough time to reach and escape through the boundary of $S_m$ before time~$t_m$,
and will hence have to accumulate inside $S_m$.  This implies that the entropy missing from each copy of $S_n$
at time $t_m$ must in fact be much less than $\Theta(\abs{\partial S_n})$.

Let us make this argument precise.
Given a random configuration $X$ and finite set $A\subseteq\ZZ^d$,
let us write $\Xi(X_A)\isdef\abs{A}\hmax-H(X_A)$
for the difference between the entropy of $X_A$ and the maximum entropy capacity of $A$.
Note that $\Xi(X_A)\geq 0$ with equality if and only if $X_A$ is uniformly distributed
over~$\Sigma^A$.
\begin{lemma}[Bootstrap lemma]
\label{lem:bootstrap}
	Let $\Phi$ be a PCA on $\Sigma^{\ZZ^d}$ with neighbourhood 
	$\Neighb\subseteq\intinterval{-r,r}^d$.
	Let $\tau,\delta:\ZZ^+\to[0,\infty)$
	be two functions satisfying the following property:
	\begin{itemize}
		\item for every trajectory $X^0,X^1,\ldots$ of $\Phi$ and each $n\in\ZZ^+$, 
			we have $\Xi(X^t_{S_n})\leq\delta(n)$ for all $t\geq\tau(n)$,
			irrespective of the distribution of $X^0$.
	\end{itemize}
	Let $k,m,n,t\in\ZZ^+$ be such that $m\geq k(n+2rt)$ and $t\geq\tau(m)$.
	Then, for every trajectory $X^0,X^1,\ldots$ of $\Phi$,
	we have $\Xi(X^t_{S_n})\leq\delta(m)/k^d$.
\end{lemma}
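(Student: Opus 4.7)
The strategy is to bootstrap the entropy deficit bound at scale $m$ down to scale $n$ by realizing $k^d$ essentially independent copies of the $S_n$-marginal inside a single $S_m$-window. Concretely, I will first choose translates $u_1, \ldots, u_{k^d} \in \ZZ^d$ arranged in a $k \times \cdots \times k$ grid with spacing $n + 2rt$, so that the past cones $\Moore^{rt}(S_n + u_i)$ are pairwise disjoint and all fit inside a shifted hypercube $S_m + v$; the hypothesis $m \geq k(n + 2rt)$ is exactly what allows this packing. Given the trajectory $X^0, X^1, \ldots$, I then construct an auxiliary trajectory $\tilde{X}^0, \tilde{X}^1, \ldots$ of $\Phi$ by defining the initial configuration $\tilde{X}^0$ as follows: the restrictions $\tilde{X}^0|_{\Moore^{rt}(S_n + u_i)}$ are taken to be mutually independent, with the $i$-th distributed as the $u_i$-shift of $X^0|_{\Moore^{rt}(S_n)}$; outside $\bigcup_i \Moore^{rt}(S_n + u_i)$ the value of $\tilde{X}^0$ is set arbitrarily.

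Two properties drive the argument. First, because $\Phi$ is shift-invariant and has radius-$r$ interactions, the distribution of $\tilde{X}^t_{S_n + u_i}$ coincides with the $u_i$-shift of the distribution of $X^t_{S_n}$, so $\Xi(\tilde{X}^t_{S_n + u_i}) = \Xi(X^t_{S_n})$ for every $i$. Second, because the past cones are pairwise disjoint and the noise variables are independent across cells and times, the random variables $\tilde{X}^t_{S_n + u_i}$ for $i = 1, \ldots, k^d$ are mutually independent. Independence turns subadditivity of entropy into an equality, giving
\begin{align}
    \Xi\bigl(\tilde{X}^t_{\bigsqcup_i (S_n + u_i)}\bigr)
    \;=\; \sum_{i=1}^{k^d} \Xi\bigl(\tilde{X}^t_{S_n + u_i}\bigr)
    \;=\; k^d\,\Xi(X^t_{S_n}) \,.
\end{align}
Monotonicity of $\Xi$ under enlargement of the observed region (which follows from $H(Y_{A'}) \leq H(Y_A) + \abs{A' \setminus A}\hmax$) bounds this by $\Xi(\tilde{X}^t_{S_m + v})$, and the lemma's hypothesis applied to the trajectory $\tilde{X}$, together with shift-invariance of $\Phi$ to move from $S_m$ to $S_m + v$, gives $\Xi(\tilde{X}^t_{S_m + v}) \leq \delta(m)$ because $t \geq \tau(m)$. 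Chaining the three inequalities yields $k^d\,\Xi(X^t_{S_n}) \leq \delta(m)$, as required.

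The main delicate step is the construction of $\tilde{X}^0$: one has to verify that the prescribed marginals on the disjoint past cones genuinely specify a joint random configuration (they do, since the cones are disjoint and we declare the restrictions independent) and that shift-invariance of $\Phi$ together with the finite propagation speed transports each marginal to the right distribution on $S_n + u_i$. Everything else---the independence that turns subadditivity into additivity, the monotonicity of $\Xi$, and the application of the hypothesis via shift-invariance---is routine once this construction is in place.
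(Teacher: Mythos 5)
Your proof is correct and follows essentially the same route as the paper: pack $k^d$ disjoint dependence cones $\Moore^{rt}(\cdot)$ into the large cube, build an auxiliary trajectory whose initial restrictions to these cones are independent copies of the law of $X^0_{\Moore^{rt}(S_n)}$, use independence together with monotonicity of $\Xi$ to get $k^d\,\Xi(X^t_{S_n})\le\Xi(\tilde{X}^t_{S_m})$, and apply the hypothesis at scale $m$. The only cosmetic difference is that the paper places the translates $Q_w=(n+2rt)w+\intinterval{rt,rt+n-1}^d$ directly inside $S_m$, so no auxiliary shift $v$ of the large cube is needed.
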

\begin{proof}
	Observe that we can pack $k^d$ disjoint copies of $\Moore^{rt}(S_n)$ in $S_m$. 
	Namely, for $w\in S_k$, let $Q_w\isdef(n+2rt)w+\intinterval{rt,rt+n-1}^d$.
	Then, the sets $\Moore^{rt}(Q_w)$ (for $w\in S_k$) are disjoint and are all included in $S_m$.
	Construct a random configuration $Y$ by choosing the patterns
	$Y_{\Moore^{rt}(Q_w)}$ (for $w\in S_k$) independently according to
	the distribution of $X^0_{\Moore^{rt}(S_n)}$, and assigning arbitrary values to the remaining cells.
	Consider a trajectory $Y^0,Y^1,\ldots$ of $\Phi$ with initial configuration $Y^0\isdef Y$.
	Clearly, the patterns $Y^t_{Q_w}$ (for $w\in S_k$)
	are independent and have the same distribution as $X^t_{S_n}$.
	Therefore, using the chain rule, we have
	\begin{align}
		H(Y^t_{S_m}) &=
			\sum_{w\in S_k} H(Y^t_{Q_w}) +
				H\bigg(Y^t_{S_m\setminus\bigcup_{w\in S_k}Q_w}\given[\bigg]Y^t_{\bigcup_{w\in S_k}Q_w}\bigg) \\
		&\leq
			k^d H(X^t_{S_n}) + \abs[\bigg]{S_m\setminus\bigcup_{w\in S_k}Q_w}\hmax \;,
	\end{align}
	which implies
	\begin{align}
		\Xi(Y^t_{S_m}) &\geq k^d \Xi(X^t_{S_n}) \;.
	\end{align}
	Now, since $Y^0,Y^1,\ldots$ is a trajectory of $\Phi$ and $t\geq\tau(m)$,
	we have $\Xi(Y^t_{S_m})\leq\delta(m)$.
	It follows that $\Xi(X^t_{S_n})\leq\delta(m)/k^d$, as claimed.
\end{proof}

\subsection{Proof of the theorem}
\label{sec:entropy:proof}

\begin{proof}[Proof of \cref{thm:main-result}]
	Let $X^0,X^1,\ldots$ be a trajectory of $\Phi$.
	For every finite set $A\subseteq\ZZ^d$,
	we show that $\Xi(X^t_A)=\abs{A}\hmax - H(X^t_A)\to 0$ exponentially fast as $t\to\infty$.
	This would imply that the distribution of $X^t$ converges weakly
	to the uniform Bernoulli measure $\lambda$.
	We then translate the bound on $\Xi(X^t_A)$ to a bound on total variation distance.
	
	Let $r\in\NN$ be such that the neighbourhood of $F$ (and hence also of~$\Phi$)
	is included in~$\intinterval{-r,r}^d$.
	Let $\tilde{c}(J)$ and $a_0,b_0>0$ be as in \cref{prop:entropy:evolution:equilibrium},
	and note that $\tilde{c}(S_n)=\Theta(n^{d-1})$ as $n\to\infty$.
	Choose constants $a_1,b_1,c_1>0$ such that
	$\tau(n)\isdef a_1\log n + b_1\geq a_0\log\frac{\abs{S_n}}{\tilde{c}(S_n)}+b_0$
	and $\delta(n)\isdef c_1 n^{d-1}\geq 2\tilde{c}(S_n)$
	for every $n\in\ZZ^+$.
	Then, by \cref{prop:entropy:evolution:equilibrium},
	for each $n\in\ZZ^+$ we have $\Xi(X^t_{S_n})\leq\delta(n)$ whenever $t\geq\tau(n)$,
	hence the hypothesis of \cref{lem:bootstrap} is satisfied.
	
	Suppose that $A\subseteq\ZZ^d$ is a finite set of cells
	with diameter~$n$.
	This means that $A\subseteq u+S_n$ for some $u\in\ZZ^d$.
	For $t\geq 0$, define $m_t\isdef k_t(n+2rt)$,
	where $k_t\in\ZZ^+$ is to be determined.
	Then,
	according to \cref{lem:bootstrap},
	\begin{align}
		\Xi(X^t_A) &\leq \frac{\delta(m_t)}{k_t^d}
			= \frac{c_1 k_t^{d-1} (n+2rt)^{d-1}}{k_t^d} = c_1 k_t^{-1}(n+2rt)^{d-1} \;,
	\end{align}
	provided that
	\begin{align}
	\label{eq:main-result:proof:t}
		t &\geq \tau(m_t) = a_1\log k_t + a_1\log(n+2rt) + b_1 \;.
	\end{align}
	
	Now, pick $\beta_1$ such that $0<\beta_1<1/a_1$,
	and set $k_t\isdef\lfloor\ee^{\beta_1 t}\rfloor$.
	Observe that with this choice,
	condition~\eqref{eq:main-result:proof:t} is satisfied
	for all sufficiently large~$t$.
	In particular, we can find constants $a,b>0$
	such that \eqref{eq:main-result:proof:t} holds whenever $t\geq a\log n + b$.
	It follows that, for a suitable constant $c_2>0$,
	$\Xi(X^t_A)\leq c_2\ee^{-\beta_1t}(n+2rt)^{d-1}$ for all $t\geq a\log n + b$.
	(Here, we need a new constant $c_2$ instead of $c_1$
	in order to compensate for replacing $\lfloor\ee^{\beta_1 t}\rfloor$ with $\ee^{\beta_1 t}$.)
	In particular, $\Xi(X^t_A)\to 0$ exponentially fast as $t\to\infty$.

	Next, let $\mu^t$ denote the distribution of $X^t$,
	and denote by $\mu^t_A$ and $\lambda_A$ the marginals of
	$\mu^t$ and $\lambda$ on~$A$.
	Observe that $\Xi(X^t_A)$ is the same as $D(\mu^t_A\|\lambda_A)$,
	the Kullback--Leibler divergence of $\mu^t_A$ relative to $\lambda_A$.
	According to Pinsker's inequality~\cite[Lemma~11.6.1]{CovTho06},
	we have
	\begin{align}
		\norm{\mu^t_A-\lambda_A}_\TV &\leq \sqrt{\frac{1}{2} D(\mu^t_A\|\lambda_A)}
			= \sqrt{\frac{1}{2}\Xi(X^t_A)} \;.
	\end{align}
	We conclude that
	\begin{align}
		\norm{\mu^t_A-\lambda_A}_\TV &\leq
			\sqrt{c_2/2}\,\ee^{-(\beta_1/2)t}(n+2rt)^{(d-1)/2}	
	\end{align}
	for every finite $A\subseteq\ZZ^d$ with diameter~$n$ and all $t\geq a\log n + b$.
	The claim follows by choosing $\beta>0$ slightly smaller than $\beta_1/2$
	and $\alpha>0$ sufficiently large.
\end{proof}

\section{Rapid information loss}
\label{sec:information-loss}
\subsection{Reversible CA on an infinite lattice}
\label{sec:information-loss:infinite-lattice}

\Cref{thm:main-result} shows that
for a surjective CA subject to positive additive noise, mixing occurs quite fast,
in the sense that it takes only $\bigo(\log n)$ steps before
the marginal on each hypercube of size $n^d$ is within $\varepsilon$-distance
from its stationary value.

To make this precise, let us use the notation
\begin{align}
	\norm{\mu-\nu}_A &\isdef \norm{\mu_A-\nu_A}_\TV
		= \frac{1}{2}\sum_{u\in\Sigma^A}\abs{\mu([u])-\nu([u])}
\end{align}
for the total variation distance between the marginals of two measures $\mu$ and $\nu$
on $A\Subset\ZZ^d$.  Let
\begin{align}
	d_A(t) &\isdef \sup_{x\in\Sigma^{\ZZ^d}}\norm{\Phi^t(x,\cdot)-\lambda}_A
\end{align}
be the maximum distance from stationarity of the marginal on $A$ at time $t$.
Note that $d_A(t)$ is non-increasing with $t$.
Given a finite set $A\subseteq\ZZ^d$ and $\varepsilon>0$, we let
\begin{align}
	t_\mix(A,\varepsilon) &\isdef \inf\{t: d_A(t)\leq\varepsilon\} \;.
\end{align}
We call $t_\mix(A,\varepsilon)$ the \emph{mixing time} of set $A$
at accuracy level $\varepsilon$ (cf.~\cite{LevPerWil09}).

As before, we let $S_n\isdef\intinterval{0,n-1}^d$ be a hypercube of size $n^d$
in the lattice.
\begin{corollary}[Mixing time of surjective CA with additive noise]
	Let $\Phi$ be a PCA describing the perturbation of a surjective CA
	$F:\Sigma^{\ZZ^d}\to\Sigma^{\ZZ^d}$ with positive additive noise.
	For every $\varepsilon>0$,
	we have $t_\mix(S_n,\varepsilon)=\bigo(\log n)$ as $n\to\infty$.
\end{corollary}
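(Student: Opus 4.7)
The corollary is a direct consequence of \cref{thm:main-result}, so my plan is simply to specialize the bound in that theorem to the case $A = S_n$ and then solve for $t$ in terms of $\varepsilon$ and $n$.

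First, I would note that the hypercube $S_n = \intinterval{0,n-1}^d$ has diameter exactly $n$, so \cref{thm:main-result} applies verbatim: for any trajectory $(X^t)_{t\in\NN}$ and any starting configuration $x$, we have
\begin{align}
	\norm{\Phi^t(x,\cdot)-\lambda}_{S_n}
		&\leq \alpha \ee^{-\beta t} n^{(d-1)/2}
\end{align}
whenever $t \geq a\log n + b$, where $\alpha,\beta,a,b>0$ are the constants supplied by the theorem. Taking the supremum over $x\in\Sigma^{\ZZ^d}$ gives the same bound for $d_{S_n}(t)$.

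Next, I would solve the inequality $\alpha \ee^{-\beta t} n^{(d-1)/2} \leq \varepsilon$ for~$t$. Taking logarithms, this is equivalent to
\begin{align}
	t &\geq \frac{1}{\beta}\log(\alpha/\varepsilon) + \frac{d-1}{2\beta}\log n \;.
\end{align}
Combined with the admissibility condition $t \geq a\log n + b$, we see that $d_{S_n}(t)\leq\varepsilon$ as soon as
\begin{align}
	t &\geq \max\!\left(a\log n + b,\; \frac{d-1}{2\beta}\log n + \frac{1}{\beta}\log(\alpha/\varepsilon)\right) \;,
\end{align}
so that $t_\mix(S_n,\varepsilon) \leq \max(a,\tfrac{d-1}{2\beta})\log n + \bigo_\varepsilon(1) = \bigo(\log n)$ as $n\to\infty$, with the implicit constant depending on $\varepsilon$ and~$d$ but not on~$n$.

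There is no real obstacle here; the entire content of the corollary is already captured by \cref{thm:main-result}, and the only thing to observe is that the polynomial prefactor $n^{(d-1)/2}$ is dominated by any exponential $\ee^{\beta t}$ once $t$ exceeds a logarithmic threshold in~$n$. I would simply emphasize that the same argument gives, more generally, $t_\mix(A,\varepsilon)=\bigo(\log n)$ for any sequence of sets $A$ of diameter~$n$ (not just hypercubes), since \cref{thm:main-result} is stated in that generality.
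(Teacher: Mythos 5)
Your proposal is correct and follows essentially the same route as the paper: apply \cref{thm:main-result} with $A=S_n$, solve $\alpha\ee^{-\beta t}n^{(d-1)/2}\leq\varepsilon$ for $t$, and take the maximum with the admissibility threshold $a\log n+b$ to conclude $t_\mix(S_n,\varepsilon)=\bigo(\log n)$. The closing remark that the same bound holds for arbitrary sets of diameter~$n$ is a harmless and accurate observation, since the theorem is indeed stated in that generality.
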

\begin{proof}
	According to \cref{thm:main-result},
	\begin{align}
		d_{S_n}(t) &\leq \alpha \ee^{-\beta t} n^{(d-1)/2}
	\end{align}
	for every $n\in\ZZ^+$ and $t\geq a\log n + b$.
	Therefore, $d_{S_n}(t)\leq\varepsilon$ as soon as
	\begin{align}
		t &\geq
			\max\left\{
				a\log n + b,
				\frac{d-1}{2\beta}\log n + \frac{1}{\beta}(\log\alpha - \log\varepsilon)
			\right\} \;,
	\end{align}
	which means $t_\mix(S_n,\varepsilon)=\bigo(\log n)$.
\end{proof}

In other words, for every $n>0$ and any accuracy $\varepsilon>0$,
the distribution of the pattern on $S_n$ becomes $\varepsilon$-indistinguishable
from the uniform distribution after $\bigo(\log n)$ time steps.

\subsection{Reversible parallel computers with finite space}
\label{sec:information-loss:finite-lattice}

In the proof of \cref{thm:main-result}, the bootstrap lemma was
needed to handle the potential diffusion of entropy on the infinite lattice.
For a reversible parallel computer with finite space, a simpler entropy
argument can be used to show that, in the presence of positive additive noise,
the state of the system becomes indistinguishable from pure noise
in logarithmic number of steps.
This is a reformulation of Theorem~2 of
Aharonov et al.~\cite{AhaBenImpNis96} in a slightly more general setup.

We consider a model of parallel reversible computation with finite space
in which every piece of data is (reversibly) processed at each time step,
and is hence exposed to noise.
More specifically, let $A$ be a finite set and $\Sigma$ a finite alphabet.
Let $\family{F}$ be a family of bijective maps $F:\Sigma^A\to\Sigma^A$.
Consider a computational process whose data is an element of $\Sigma^A$,
and in which, at every step, an arbitrary map from $\family{F}$ is applied to the data,
and the result is then subjected to positive additive noise.
For instance, a reversible logic circuit with $t$ layers, $n$ nodes per layer,
and wires between consecutive layers only,
in which every node is subject to positive noise fits in this setting.

The above process can be described by a time-inhomogeneous finite-state Markov chain
which mixes rapidly. 
To make this precise,
let us review some terminology and notation regarding finite-state Markov chains~\cite{LevPerWil09}.
Let $\Phi$ be an ergodic (possibly time-inhomogeneous)
Markov chain with finite state space $\pspace{X}$ and unique stationary distribution $\pi$.
We write $\Phi^{s\to t}$ for the transition matrix from time~$s$ to time~$t$.
Let
\begin{align}
	d_\Phi(t) &\isdef \sup_{x\in\pspace{X}}\norm{\Phi^{0\to t}(x,\cdot)-\pi}_\TV
\end{align}
denote the maximum distance from stationarity of the distribution at time $t$.
Since $\Phi$ is ergodic, $d(t)\to 0$ monotonically as $t\to\infty$.
The \emph{mixing time} at accuracy $\varepsilon>0$ is defined as
\begin{align}
	t_\mix(\Phi,\varepsilon) &\isdef \inf\{t: d_\Phi(t)\leq\varepsilon\} \;.
\end{align}

\begin{theorem}[Rapid mixing of reversible computer with noisy components]
\label{thm:reversible-computer:mixing-time}
	Let $A$ be a finite set and $\Sigma$ a finite alphabet.
	Let $\family{F}$ be a family of bijective maps $F:\Sigma^A\to\Sigma^A$.
	Let $\Phi$ be a time-inhomogeneous Markov chain on $\Sigma^A$ in which,
	at each time step, first an (arbitrary) element of $\family{F}$
	is applied to the current state, and then the state is subjected
	to positive additive noise with fixed noise distribution.
	Then, $\Phi$ is ergodic with the uniform distribution on~$\Sigma^A$
	as the stationary distribution.
	Furthermore, for every $\varepsilon>0$,
	$t_\mix(\Phi,\varepsilon)=\bigo(\log\abs{A})$ as $\abs{A}\to\infty$.
\end{theorem}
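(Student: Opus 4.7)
The plan is to carry out the same entropy argument as in the proof of \cref{thm:main-result}, but take advantage of the fact that space is finite so there is no ``boundary'' through which entropy can leak. Concretely, let $X^t$ denote the state of the chain at time~$t$, let $\hmax=\log\abs{\Sigma}$, and track the entropy deficit $\Xi(X^t)\isdef\abs{A}\hmax - H(X^t)$.

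The two ingredients are the following. First, any bijection $F\in\family{F}$ preserves Shannon entropy exactly: $H(FX)=H(X)$. Second, \cref{lem:entropy:noise:joint} applied to the full configuration on~$A$ (of which there are $\abs{A}$ cells, each getting an independent copy of the noise) gives, when $Y^t=FX^{t-1}+\underline{N}^t$ with $\underline{N}^t$ the noise vector at step~$t$,
\begin{align}
	H(X^t) \;=\; H(Y^t) \;\geq\; \abs{A}\kappa\hmax + (1-\kappa) H(FX^{t-1}) \;=\; \abs{A}\kappa\hmax + (1-\kappa) H(X^{t-1}) \;,
\end{align}
where $\kappa\isdef\abs{\Sigma}\min_{a\in\Sigma}q(a)\in(0,1]$. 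Rewriting this in terms of the deficit gives the clean recursion $\Xi(X^t)\leq (1-\kappa)\,\Xi(X^{t-1})$, and iterating from any initial state yields $\Xi(X^t)\leq (1-\kappa)^t\,\abs{A}\hmax$.

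From here the conclusion is immediate. Let $\pi$ be the uniform distribution on $\Sigma^A$ and $\mu^t$ the law of $X^t$. Since $\Xi(X^t)=D(\mu^t\,\|\,\pi)$, Pinsker's inequality gives
\begin{align}
	\norm{\mu^t-\pi}_\TV \;\leq\; \sqrt{\tfrac{1}{2}\,\Xi(X^t)} \;\leq\; \sqrt{\tfrac{1}{2}(1-\kappa)^t\abs{A}\hmax} \;.
\end{align}
In particular $\mu^t\to\pi$, proving ergodicity with the uniform distribution as the unique stationary measure (uniqueness follows because any stationary law must have zero deficit, hence equal $\pi$). Solving $\sqrt{\tfrac{1}{2}(1-\kappa)^t\abs{A}\hmax}\leq\varepsilon$ for~$t$ yields
\begin{align}
	t_\mix(\Phi,\varepsilon) \;\leq\; \frac{\log\abs{A} + \log(\hmax/(2\varepsilon^2))}{-\log(1-\kappa)} \;=\; \bigo(\log\abs{A}) \;,
\end{align}
as claimed.

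There is no real obstacle in this argument: the bootstrap lemma and the geometric analysis of $\partial\Moore^r(J)$ that complicated the proof of \cref{thm:main-result} are not needed here because the maps in~$\family{F}$ are globally bijective on a finite state space, so entropy has nowhere to go and must accumulate monotonically toward $\abs{A}\hmax$. The only point worth flagging is that the argument tolerates time-inhomogeneity: at each step we may pick an arbitrary $F\in\family{F}$, but the recursion uses only that whatever $F$ is picked is a bijection and that the noise distribution is the same fixed positive~$q$, so the bound on $\Xi(X^t)$ holds uniformly in the choice of the sequence $F^1,F^2,\ldots$.
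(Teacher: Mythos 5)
Your argument is correct and is essentially the paper's own proof: the same use of bijectivity to preserve entropy, \cref{lem:entropy:noise:joint} to get the recursion $\Xi(X^t)\leq(1-\kappa)\Xi(X^{t-1})$, and Pinsker's inequality to convert the entropy deficit into a total variation bound. Your closing remarks about the absence of boundary leakage and the uniformity over the choice of maps $F_t$ match the paper's framing as well.
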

\begin{proof}
	Let $X^t$ denote the state of the Markov chain at time $t$.
	Let $\lambda$ denote the uniform distribution on $\Sigma^A$.
	Let $F_t\in\family{F}$ be the map applied at time~$t$,
	and let $q$ denote the noise distribution.
	Thus, $X^t$ is obtained from $X^{t-1}$ by applying additive noise with distribution $q$
	to $F_t(X^{t-1})$.
	Let $\kappa\isdef\abs{\Sigma}\min_{a\in\Sigma}q(a)$.
	
	The bijective maps $F_t$ do not change the entropy,
	hence according to \cref{lem:entropy:noise:joint},
	\begin{align}
		H(X^t) &\geq \abs{A}\kappa\hmax + (1-\kappa)H\big(F_t(X^{t-1})\big) \\
			&= \abs{A}\kappa\hmax + (1-\kappa)H(X^{t-1})
	\end{align}
	for each $t\geq 1$.  Therefore, setting $\Xi(X^t) \isdef \abs{A}\hmax - H(X^t)$,
	we have
	\begin{align}
		\Xi(X^t) &\leq (1-\kappa)^t\Xi(X^0) \leq (1-\kappa)^t\abs{A}\hmax
	\end{align}
	for every $t\geq 0$.
	This shows that the Markov chain is ergodic with $\lambda$
	as the unique stationary distribution.
	
	Now, recall that $\Xi(X^t)$ is the same as the Kullback--Leibler divergence
	$D(\mu^t\,\|\,\lambda)$, where~$\mu^t$ is the distribution of the Markov chain
	at time~$t$.
	Therefore, Pinsker's inequality~\cite[Lemma~11.6.1]{CovTho06}
	gives
	\begin{align}
		\norm{\mu^t-\lambda}_\TV &\leq \sqrt{\frac{1}{2} D(\mu^t\,\|\,\lambda)}
		=
			\sqrt{\frac{1}{2}\Xi(X^t_A)}
		\leq
			\sqrt{\hmax/2}\abs{A}^{\nicefrac{1}{2}}(1-\kappa)^{\nicefrac{t}{2}} \;.
	\end{align}
	Since $X^0$ is arbitrary, we get
	\begin{align}
		d_\Phi(t) &\leq \sqrt{\hmax/2}\abs{A}^{\nicefrac{1}{2}}(1-\kappa)^{\nicefrac{t}{2}} \;,
	\end{align}
	from which it follows that $t_\mix(\Phi,\varepsilon)=\bigo(\log\abs{A})$ as $\abs{A}\to\infty$.
\end{proof}

\section{Discussion}
\label{sec:discussion}

\paragraph*{Is there anything we can do with a noisy reversible CA?}

Since an input of size~$n$ is lost in $\bigo(\log n)$ steps,
no cell on the lattice will have time to ``sense'' all the input.
Thus, it seems unlikely that one can do any meaningful computation
with a noisy reversible CA in a scalable fashion.

Theorem~1 of Aharoni et al.~\cite{AhaBenImpNis96} states that,
with exponential redundancy, any logic circuit can be simulated
by a reliable reversible circuit.
In order to simulate a computation with $d$ steps (i.e., a circuit of depth~$d$),
every input bit is provided in $3^d$ separate copies.
The $i$-th step of the computation is performed $3^{d-i+1}$ times in parallel on separate copies,
and majority vote on groups of~$3$ is used to pass the results of the $i$-th step
to the next step.
Such an error-correcting mechanism cannot be implemented
in the setting of cellular automata on~$\ZZ^d$,
because the dependence graph of each cell grows at most polynomially.

\paragraph*{Reversible serial computers}

\Cref{thm:reversible-computer:mixing-time} describes the loss of information
in a parallel model of a reversible computer in which every bit of data is updated at every step
of the computation and is therefore subjected to noise.
In a serial computer (such as a Turing machine) on the other hand,
only a bounded portion of data is operated on at each step, and only that portion
is affected by significant noise.
We can there wonder about the possibility of having a reversible serial computer
that is capable of performing a significantly long computation reliably in the presence of noise.

\paragraph*{How much irreversibility is needed to do reliable computation with a CA?}

\Cref{thm:main-result} suggests that,
in order to perform reliable computation (with a CA-like computer),
some degree of irreversibility is unavoidable.
Can we quantify the degree of irreversibility needed to perform reliable computation
at a given noise level?

\bibliography{siamak-bibliography}

\end{document}